\newcommand{\GRS}{\textnormal{GRS}}
\newcommand{\F}{\mathbb{F}}
\newcommand{\T}{\textnormal{T}}
\newcommand{\tr}{\textnormal{tr}}
\newcommand{\Ev}{\textnormal{Ev}}
\newtheorem{theorem}{Theorem}[section]
\newtheorem{definition}[theorem]{Definition}
\newtheorem{proposition}[theorem]{Proposition}
\newtheorem{lemma}[theorem]{Lemma}
\title{The Dimension of Subcode-Subfields of Shortened Generalized Reed Solomon Codes}
\author{Fernando Hernando \and Kyle Marshall\and Michael E. O'Sullivan}
\thanks{The work of K. Marshall and M. O'Sullivan is supported by the National Science Foundation under Grant No. CCF-0916492.} 
\begin{document}

\begin{abstract}
Reed-Solomon (RS) codes are among the most ubiquitous codes due to  their good parameters as well as efficient encoding and decoding procedures. However, RS codes suffer from having a fixed length. In many applications where the length is static, the appropriate length can be obtained by an RS code by shortening or puncturing. Generalized Reed-Solomon (GRS) codes are a generalization of RS codes, whose subfield-subcodes are extensively studied. In this paper we show that a particular class of GRS codes produces many subfield-subcodes with large dimension. An algorithm for searching through the codes is presented as well as a list of new codes obtained from this method.
\end{abstract}

\maketitle

\section{Introduction}

Generalized Reed-Solomon ($\GRS$) codes are  among the 
most widely studied codes because of their useful algebraic properties and their efficient encoding and decoding algorithms. Subfield-subcodes of $\GRS$ codes are known as alternant codes, a class which includes popular codes such as  BCH codes and the classical Goppa codes. In general, the minimum distance and the dimension of a subfield-subcode (SFSC) are not trivial to obtain. However, since $\GRS$ codes are maximum distance separable, the minimum distance of the SFSC is at least bounded by the minimum distance of the original code. This property can be used to construct good subfield-subcodes (SFSC) from $\GRS$ codes. 

Although there is no general formula for the dimension or the minimum distance of alternant codes, many good bounds and special cases have been obtained. Delsarte \cite{Delsarte1} studied  the subfield-subcodes of generalized Reed-Solomon codes and their duals, which are trace codes. Stichtenoth improved Delsarte's  lower bound on dimension in \cite{Stichtenoth1} and Shibuya et al gave a better lower bound \cite{Tomoharu- Ryutaroh-Kohichi}.   Closed formulae are obtained for the true dimension of certain alternant codes in \cite{Dianwu},  for  subfield-subcodes of toric codes in \cite{Hernando1} and for a family of  Goppa codes in \cite{Veron1}.

In this paper, we establish a formula which provides a lower bound for the dimension of a specific class of alternant codes. We are motivated by the following idea of Roseiro et al  \cite{Roseiro1}: use the kernel of the associated trace map to compute the dimension of the subfield-subcode. Veron also use this idea \cite{Veron1} to compute the true dimension of some binary Goppa codes. 

In Section~2 we introduce the tools neccessary for the rest of the
paper. In particular, we provide an interpretation for Generalized
Reed-Solomon codes as evaluation codes. In Section~3,  we introduce and deduce important properties of a map $T$ extending the trace map.  The interaction between $T$ and cyclotomic polynomials  helps us to understand  punctured Reed-Solomon codes. In Section~4 we give a lower bound for the dimension of  Subfield-Subcodes of certain shortened GRS codes. In section 5 we propose two algorithms that offer  different search patterns through the same set of codes. We found $98$ codes improving the parameters for best-known codes using this algorithms.

\section{Background}

Throughout this article $p$ is a prime power, $m$ is a postive integer and $1 \leq n \leq p^m-1$.  We also set $N=p^m-1$.

\begin{definition}\label{d:grscode}  Let $\alpha = (\alpha_1,\alpha_2,...,\alpha_{n})$ be a vector of distinct non-zero elements of $\F_{p^{m}}$, and let $v = (v_1,v_2,...,v_{n})\in (\F_{p^m}^*)^n$. The Generalized Reed-Solomon code over $\F_{p^m}$ with length $n$ and dimension $k \leq n$ is defined to be $$\GRS_k(\alpha,v) = \{\Ev_{\alpha,v}(f) \colon \deg f < k\}$$ where $\Ev_{\alpha,v}(f) = (f(\alpha_1)v_1, f(\alpha_2)v_2, ..., f(\alpha_{n})v_{n})$. \end{definition}
                \vskip 6pt

The vector $\alpha$ specifies the evaluation points of the code, and $v$ is a vector we will call the twist vector. A classical narrow-sense Reed-Solomon code occurs when $n = p^m-1$ and $\alpha = (1,\eta, \eta^2, ..., \eta^{n-1})$, where $\eta$ is primitive and $v = \overline{1} = (1,1,...,1)$. The family $\GRS$ codes is closed under taking duals. In particular, $$\GRS_k(\alpha,v)^\perp = \GRS_{n-k}(\alpha, u)$$ where $u = ( u_1,u_2...,u_n)$ and $u_{i}^{-1} =v_i \prod_{j\neq i}(\alpha_i - \alpha_j)$ \cite[ Theorem 5.3.3]{HuffmanPless}.

\begin{definition}\label{d:sfsc} Let $\mathcal{C}$ be a code over $\F_{p^m}$. The subfield-subcode of $\mathcal{C}$, $\mathcal{C}|_{\F_p}$, is given by the codewords $c\in\mathcal{C}$ whose coordinates all belong to the subfield $\F_p$. That is, if $n$ is the length of $\mathcal{C}$, then $$\mathcal{C}|_{\F_p} = \mathcal{C}\cap\F_{p}^n = \{c\in\mathcal{C}\colon c\in\F_{p}^n\}.$$ \end{definition}
                \vskip 6pt

 GRS codes form a large class of maximum distance separable codes. Since the subfield-subcode will always have minimum distance at least as great as that of the original code, a lower bound for the minimum distance of a subfield-subcode can be taken as the minimum distance of the original code. A lower bound for the dimension can be obtained by expanding the elements of the check matrix as $m$-dimensional vectors over $\F_p$, giving $$\dim_{\F_p}(\mathcal{C}\cap \F_p^n) \geq n - m(n-k).$$ This bound is very loose, and better bounds exist for specific classes of codes \cite{Stichtenoth1}. In the case of $\GRS$ codes, it is not known which twist vectors give rise to GRS codes that have subfield-subcodes with large dimension. A large class of subfield-subcodes of $\GRS$ codes with good parameters are the Goppa codes \cite[ Section13.5.1]{HuffmanPless}.

The following result is a straightforward application of the Chinese Remainder Theorem.

\begin{lemma}\label{l:iso} Let $R = \F_{p^m}[x]/\langle x^N - 1\rangle$ and $\alpha = (1,\eta,...,\eta^{N-1})$ where $\eta$ is a primitive element in $\F_{p^m}$. Then, $R$ is isomorphic to $\F_{p^m}^N$ under the map $\Ev$, where $$\Ev \colon R\rightarrow \F_{p^m}^N,\ \ \ \ \ f\mapsto (f(1), f(\eta), ..., f(\eta^{N-1})).$$ $\Ev^{-1}$is given by Lagrange Interpolation. 

\end{lemma}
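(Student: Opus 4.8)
The plan is to exhibit the map $\Ev$ as a ring homomorphism and then show it is bijective by a dimension/injectivity count, which is exactly the shape of a Chinese Remainder Theorem argument. First I would observe that $x^N - 1$ factors completely into distinct linear factors over $\F_{p^m}$: since $\eta$ is primitive, the $N$ elements $1, \eta, \dots, \eta^{N-1}$ are precisely the $N$ distinct roots of $x^N-1$ in $\F_{p^m}$ (each is a root because $(\eta^i)^N = (\eta^N)^i = 1$, and there are exactly $N$ of them, matching the degree), so $x^N - 1 = \prod_{i=0}^{N-1}(x - \eta^i)$ with the factors pairwise coprime. The Chinese Remainder Theorem for the PID $\F_{p^m}[x]$ then gives a ring isomorphism $\F_{p^m}[x]/\langle x^N-1\rangle \;\cong\; \prod_{i=0}^{N-1} \F_{p^m}[x]/\langle x - \eta^i\rangle$, and each factor $\F_{p^m}[x]/\langle x-\eta^i\rangle$ is canonically identified with $\F_{p^m}$ via $f \mapsto f(\eta^i)$ (evaluation). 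Composing these identifications is precisely the map $\Ev$, so $\Ev$ is an isomorphism of $\F_{p^m}$-algebras.

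Alternatively, and perhaps more self-containedly, I would argue directly: $\Ev$ is clearly $\F_{p^m}$-linear and multiplicative on $R$ (it is induced by evaluation, and $x^N-1$ maps to the zero vector since $(\eta^i)^N-1 = 0$ for all $i$, so the map is well-defined on the quotient). For injectivity, if $\Ev(f) = 0$ then the polynomial representative of $f$ of degree $< N$ has all of $1,\eta,\dots,\eta^{N-1}$ as roots, hence is divisible by $\prod(x-\eta^i) = x^N-1$, forcing that representative to be $0$, i.e. $f = 0$ in $R$. Since $\dim_{\F_{p^m}} R = N = \dim_{\F_{p^m}} \F_{p^m}^N$, injectivity of the linear map $\Ev$ forces it to be bijective, hence an isomorphism.

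Finally, for the statement about $\Ev^{-1}$: once bijectivity is established, every vector $(c_0,\dots,c_{N-1}) \in \F_{p^m}^N$ has a unique preimage of degree $< N$, and the explicit Lagrange interpolation polynomial $f(x) = \sum_{i=0}^{N-1} c_i \prod_{j \neq i}\frac{x - \eta^j}{\eta^i - \eta^j}$ satisfies $f(\eta^i) = c_i$ and has degree $< N$; by uniqueness it represents $\Ev^{-1}(c_0,\dots,c_{N-1})$. This is just the standard Lagrange formula, so no real work is needed here beyond citing it.

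I do not anticipate a genuine obstacle: the only point requiring minor care is verifying that $\Ev$ is well-defined on the quotient $R$ (i.e. that $x^N - 1$ lies in the kernel of evaluation at each $\eta^i$), which is immediate from $\eta$ being an $N$th root of unity, and that the $\eta^i$ are genuinely distinct, which is exactly what primitivity of $\eta$ guarantees. The rest is the textbook CRT-plus-dimension-count packaging.
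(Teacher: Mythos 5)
Your proof is correct and the first version is exactly the approach the paper intends: the paper simply remarks that the lemma "is a straightforward application of the Chinese Remainder Theorem," and your CRT argument (factoring $x^N-1$ into distinct linear factors over $\F_{p^m}$, applying CRT, and identifying each factor with $\F_{p^m}$ via evaluation) fills in precisely those details. The alternative direct injectivity-plus-dimension-count argument you offer is also valid, but it is just a repackaging of the same content, so there is no meaningful divergence from the paper.
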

                \vskip 6pt

We extend the definition of degree to elements of $R$ in the obvious manner, by choosing the minimal degree representative in $\F_{p^m}[x]$.

\begin{proposition}\label{p:newgrs} Let $\alpha = (\alpha_1, \alpha_2, ..., \alpha_{n})$ be an $n$-tuple of distinct elements of $\F_{p^m}^*$ and $v \in (\F_{p^m}^*)^n$. Then we can define a unique $g\in R$ such that $$g(x) = \left\{\begin{array}{ll} v_i & x = \alpha_i \\ 0 & x\in \F_{p^m}^*\setminus \{\alpha_i\}_{i=1}^{n} \end{array}\right.$$ and $$\GRS_k(\alpha, v) = \{\Ev_{Z_g^c, \bar{1}}(f\cdot g)\colon \deg f < k\}$$ where $$Z_g = \{x\in\F_{p^m}^* \colon g(x) = 0\}$$ and $Z_g^c$ denotes the complement of $Z_g$ in $\F_{p^m}^*$. \end{proposition}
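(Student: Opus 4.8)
The plan is to read off $g$ directly from the isomorphism of Lemma~\ref{l:iso} and then check that, after this translation, the proposed evaluation description of $\GRS_k(\alpha,v)$ collapses to Definition~\ref{d:grscode}. First I would fix the ordering $\F_{p^m}^* = \{1,\eta,\dots,\eta^{N-1}\}$ underlying Lemma~\ref{l:iso} and define the target vector $w\in\F_{p^m}^N$ whose coordinate indexed by $\eta^j$ equals $v_i$ when $\eta^j=\alpha_i$ for some $i$ and equals $0$ otherwise; this is well defined because the $\alpha_i$ are distinct. Since $\Ev\colon R\to\F_{p^m}^N$ is a bijection, there is a unique $g\in R$ with $\Ev(g)=w$, and, taking its minimal-degree representative in $\F_{p^m}[x]$, this is exactly the element described in the statement. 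Uniqueness is immediate from the injectivity of $\Ev$.

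Next I would pin down $Z_g$. Because every $v_i$ is nonzero, $g(x)=0$ for $x\in\F_{p^m}^*$ happens precisely when $x$ is not among the $\alpha_i$; hence $Z_g=\F_{p^m}^*\setminus\{\alpha_i\}_{i=1}^{n}$ and $Z_g^c=\{\alpha_1,\dots,\alpha_n\}$, a set of size $n$. Then I would compare the two descriptions of the code: for any $f\in\F_{p^m}[x]$ with $\deg f<k$ (note $k\le n\le N$, so $f$ also represents an element of $R$), evaluation at an $N$-th root of unity is unaffected by reduction modulo $x^N-1$, so $(f\cdot g)(\alpha_i)$ is well defined, and since $\Ev$ is a ring homomorphism it equals $f(\alpha_i)\,g(\alpha_i)=f(\alpha_i)\,v_i$. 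Therefore
\[
\Ev_{Z_g^c,\bar 1}(f\cdot g) = \big(f(\alpha_1)v_1,\dots,f(\alpha_n)v_n\big) = \Ev_{\alpha,v}(f),
\]
and letting $f$ range over all polynomials of degree $<k$ yields the asserted equality of codes.

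I do not expect a genuine obstacle here; the only point that needs care is the bookkeeping that multiplication in the quotient ring $R$ corresponds to coordinatewise multiplication of evaluation vectors, so that the ``$\bmod\ x^N-1$'' in the definition of $R$ is harmless for evaluations at nonzero field elements. This is precisely the content of Lemma~\ref{l:iso} together with the elementary fact that $\Ev$ respects products, so the proof amounts to assembling these observations.
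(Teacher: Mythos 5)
Your proof is correct and follows exactly the route the paper intends: the paper states this proposition without proof, treating it as immediate from the evaluation isomorphism of Lemma~\ref{l:iso}, and your argument (unique interpolation of $g$ via $\Ev^{-1}$, identification of $Z_g^c$ with $\{\alpha_1,\dots,\alpha_n\}$ using $v_i\neq 0$, and the fact that $\Ev$ turns multiplication in $R$ into coordinatewise multiplication of evaluations) is precisely the omitted verification.
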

                \vskip 6pt

This allows us to define the $\GRS$ code in an equivalent way that  is more concise and will prove useful in deriving bounds. To emphasize the relationship between a $\GRS$ code and $g$, we state the following definition.

\begin{definition}\label{d:newgrs} Let $g$ be as in Proposition \ref{p:newgrs} and $n = |Z_g^c|$. Then, the code $$\GRS_k\langle g\rangle = \{\Ev_{Z_g^c, \bar{1}}(f\cdot g)\colon \deg f < k\}$$ will be called the Generalized Reed-Solomon code with twist polynomial $g$. \end{definition}
                \vskip 6pt

A narrow sense Reed-Solomon code is $\GRS_k\langle 1\rangle$ and its dual is  $\GRS_{N-k}\langle x\rangle$.

 The general result of Delsarte is our starting point for creating codes with good parameters.

\begin{theorem}[Delsarte]\label{t:delsarte} Let $\mathcal{C}$ be a linear code of length $n$ over $\mathbb{F}_{p^m}$. Then, $$(\mathcal{C}\cap\mathbb{F}_{p}^n)^\perp = \tr(\mathcal{C}^\perp)$$ where $\tr(x)\colon \mathbb{F}_{p^m}\rightarrow\mathbb{F}_{p}$ sending $x\mapsto x+x^p+...+x^{p^{m-1}}$ is applied componentwise to $\mathcal{C}^\perp$.\end{theorem}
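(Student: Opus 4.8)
The plan is to reduce everything to two standard facts: the non-degeneracy of the trace bilinear form $\langle a,b\rangle = \tr(ab)$ on $\F_{p^m}$, and the fact that taking the orthogonal complement twice inside $\F_p^n$ is the identity on $\F_p$-linear subspaces. First I would record the bookkeeping observations. Since $\tr$ is $\F_p$-linear and $\mathcal{C}^\perp$ is an $\F_{p^m}$-subspace (hence in particular closed under addition and under multiplication by scalars from $\F_p$), the set $\tr(\mathcal{C}^\perp)$ is an $\F_p$-linear subspace of $\F_p^n$; likewise $\mathcal{C}\cap\F_p^n$ is $\F_p$-linear. Also, for $x\in\F_p^n$ and $d\in\F_{p^m}^n$ we have $\sum_i x_i\tr(d_i) = \tr\bigl(\sum_i x_i d_i\bigr) = \tr(x\cdot d)$, because each $x_i$ lies in $\F_p$ and therefore commutes past $\tr$.

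The heart of the argument is the identity
$$\tr(\mathcal{C}^\perp)^{\perp} = \mathcal{C}\cap\F_p^n,$$
where the outer orthogonal complement is taken in $\F_p^n$. To prove it, fix $x\in\F_p^n$. Then $x\in\tr(\mathcal{C}^\perp)^\perp$ iff $\tr(x\cdot d)=0$ for every $d\in\mathcal{C}^\perp$. Since $\mathcal{C}^\perp$ is an $\F_{p^m}$-subspace, $\lambda d\in\mathcal{C}^\perp$ for every $\lambda\in\F_{p^m}$, so this condition is equivalent to $\tr\bigl(\lambda(x\cdot d)\bigr)=0$ for all $\lambda\in\F_{p^m}$ and all $d\in\mathcal{C}^\perp$. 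By non-degeneracy of the trace form, $\tr(\lambda a)=0$ for all $\lambda$ forces $a=0$; hence the condition is equivalent to $x\cdot d=0$ for all $d\in\mathcal{C}^\perp$, i.e.\ $x\in(\mathcal{C}^\perp)^\perp=\mathcal{C}$ (orthogonality over $\F_{p^m}$). Together with $x\in\F_p^n$ this says precisely $x\in\mathcal{C}\cap\F_p^n$, establishing the displayed identity.

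To finish, I would apply the $\F_p$-orthogonal complement to both sides of this identity. Because $\tr(\mathcal{C}^\perp)$ is an $\F_p$-subspace of $\F_p^n$, we have $\tr(\mathcal{C}^\perp)^{\perp\perp}=\tr(\mathcal{C}^\perp)$, and therefore
$$\tr(\mathcal{C}^\perp) = (\mathcal{C}\cap\F_p^n)^\perp,$$
which is the assertion of the theorem.

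The step I expect to be the main (and essentially the only) subtlety is keeping the two ground fields straight: the left-hand orthogonal complement is over $\F_p$, whereas $\mathcal{C}^\perp$ and the pairing defining it live over $\F_{p^m}$. The device that makes this work is exploiting the $\F_{p^m}$-linearity of $\mathcal{C}^\perp$ to introduce the free scalar $\lambda$ and then invoking non-degeneracy of the trace form to collapse the trace condition back to an honest orthogonality condition over $\F_{p^m}$; the remainder is routine manipulation of orthogonal complements in $\F_p^n$.
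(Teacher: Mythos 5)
The paper states this result as ``Theorem (Delsarte)'' and cites it to Delsarte's 1975 paper without reproducing a proof, so there is no in-paper argument to compare against. Your proof is correct and is in fact the standard argument: prove $\tr(\mathcal{C}^\perp)^{\perp} = \mathcal{C}\cap\F_p^n$ directly by exploiting the non-degeneracy of the trace pairing on $\F_{p^m}$ together with the $\F_{p^m}$-linearity of $\mathcal{C}^\perp$ (which makes the scalar $\lambda$ available), then take $\F_p$-duals of both sides. Every step checks out, including the small point you flag about keeping the two duality pairings straight; the only cosmetic remark is that the identity $\sum_i x_i\tr(d_i)=\tr(x\cdot d)$ and the equivalence ``$\tr(\lambda a)=0$ for all $\lambda$ $\Rightarrow$ $a=0$'' are exactly the two places where $\F_p$-linearity of $\tr$ and separability of $\F_{p^m}/\F_p$ are used, and it would be worth naming them as such.
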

                \vskip 6pt

Since $\GRS$ codes are closed under duality, in the context of $\GRS$ codes Delsarte's theorem can be interpreted as in the following diagram:

$$
\xymatrix@C=4pc@R=2pc{\mathcal{C}=\GRS_k\langle g'\rangle\ar@{<->}[r]_{\textnormal{dual}}\ar[d] & \mathcal{C}^\perp=\GRS_{n-k}\langle g\rangle\ar[d]^{\tr} \\ \mathcal{C}\cap\mathbb{F}_{p}^n=\GRS_k\langle g'\rangle\cap \mathbb{F}_{p}^n \ar@{<->}[r]_{\textnormal{dual}} & \tr(\mathcal{C}^\perp)=\tr(\GRS_{n-k}\langle g\rangle)}
$$

From now on we will focus on subfield-subcodes of $\GRS_{k}\langle g'\rangle$.
A formula for the dimension of the intersection with  $ \mathcal{C}\cap\F_p^n$ is given by
\begin{equation}\label{formula1} \dim_{\F_p}(\mathcal{C}\cap \F_p^n) = n - m(n-k) + \dim_{\F_p}( \ker \tr|_{\mathcal{C}^\perp}).\end{equation}
To obtain a subfield-subcode with large dimension, we need to find an appropriate $g$ so that the trace map restricted to $\{\Ev_{Z_g^c, \bar{1}}(f\cdot g)\colon \deg f < n-k\}$ has large kernel. In the next section, we focus on a particular class of twist polynomials from which we can construct good codes.

To characterize the codes, we need the following definitions.

\begin{definition}\label{d:punccode} Let $\mathcal{C}$ be a code and $S\subset \{1,..., n\}$. The code $\mathcal{C}^S$ obtained from $\mathcal{C}$ by removing the columns of the generator matrix for $\mathcal{C}$ which correspond to the elements of $S$ is called the punctured code (in the coordinates of $S$). \end{definition}
                \vskip 6pt

The punctured code in the coordinates of $S$ is a projection of the vector space $\mathcal{C}$. The outcome of a puncturing is summarized in the following lemma.

\begin{lemma}\label{l:puncparameters} Let $\mathcal{C}$ be an $[n,k,d]$ code and let $\mathcal{C}^i$ be the punctured code in the $i$th coordinate.
\begin{enumerate}
    \item If $d > 1$, $\mathcal{C}^i$ is an $[n-1,k,d^*]$ code where $d^* = d-1$ if $\mathcal{C}$ has a minimum weight codeword with a non-zero $i$th coordinate and $d^* = d$ otherwise.
    \item When $d = 1$, $\mathcal{C}^i$ is an $[n-1,k,1]$ code if $\mathcal{C}$ has no codeword of weight $1$ whose non-zero entry is in coordinate $i$; otherwise, if $k > 1$, $\mathcal{C}^*$ is an $[n-1,k-1,d^*]$ code with $d^* \geq 1$.
\end{enumerate}
\end{lemma}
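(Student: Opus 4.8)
The plan is to realize the punctured code as the image of the coordinate-deletion map. Let $\pi_i$ denote the projection that deletes the $i$th coordinate of a length-$n$ vector; then $\mathcal{C}^i = \pi_i(\mathcal{C})$, so $\dim \mathcal{C}^i = k - \dim\bigl(\ker \pi_i|_{\mathcal{C}}\bigr)$. The first step is to describe this kernel: a codeword lies in it exactly when it is zero in every coordinate except possibly the $i$th, so $\ker \pi_i|_{\mathcal{C}}$ is $\{0\}$ unless $\mathcal{C}$ contains a weight-one codeword supported on coordinate $i$, in which case it is the one-dimensional span of such a codeword (two of them would be scalar multiples of each other).

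For part (1), since $d>1$ there is no weight-one codeword, so $\ker \pi_i|_{\mathcal{C}} = \{0\}$ and $\dim \mathcal{C}^i = k$. For the distance, I would note that deleting a coordinate lowers weight by at most one, so every nonzero $c\in\mathcal{C}$ satisfies $\mathrm{wt}(\pi_i(c)) \geq \mathrm{wt}(c) - 1 \geq d-1$, and $\pi_i(c)\neq 0$ by injectivity. If some minimum-weight codeword has a nonzero $i$th entry, its image has weight exactly $d-1$, giving $d^* = d-1$. Otherwise I would split the analysis: a codeword of weight exactly $d$ has, by hypothesis, a zero $i$th entry, so its image still has weight $d$; a codeword of weight greater than $d$ has image of weight at least $d$; hence $d^* = d$.

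For part (2), if $\mathcal{C}$ has no weight-one codeword supported at $i$, then $\ker \pi_i|_{\mathcal{C}} = \{0\}$ again, so $\dim\mathcal{C}^i = k$; since $d=1$ there is a weight-one codeword, necessarily supported at a coordinate other than $i$, and its image still has weight one, while injectivity rules out weight zero, so $d^* = 1$. If instead there is a weight-one codeword $e$ supported at $i$, then $\ker \pi_i|_{\mathcal{C}} = \langle e\rangle$ is one-dimensional, so $\dim\mathcal{C}^i = k-1$; and when $k>1$ the code $\mathcal{C}^i$ is nonzero, whence $d^*\geq 1$.

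The bulk of this is routine weight bookkeeping. The one place that needs genuine care is the second sub-case of part (1): to conclude $d^* = d$ rather than merely $d^* \geq d-1$ one must handle codewords of weight exactly $d$ separately from those of larger weight, using the hypothesis that no minimum-weight codeword is nonzero at coordinate $i$. I would also keep the hypothesis $k>1$ explicit in the last case of part (2), since it is exactly what is needed to guarantee $\mathcal{C}^i\neq\{0\}$ and hence $d^*\geq 1$.
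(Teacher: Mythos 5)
The paper states this lemma without proof (it is the standard result, essentially \cite[Theorem~1.5.1]{HuffmanPless}), so there is no in-paper argument to compare against. Your proof is correct and is the standard rank--nullity argument: identify $\mathcal{C}^i$ with $\pi_i(\mathcal{C})$, observe that $\ker\pi_i|_{\mathcal{C}}$ is either trivial or the one-dimensional span of a weight-one codeword supported at~$i$, and then do the weight bookkeeping for the minimum distance. You were right to single out the second sub-case of part~(1) as the spot needing care --- splitting codewords of weight exactly~$d$ (which are forced to vanish at~$i$) from those of weight $>d$ is exactly what upgrades the generic bound $d^*\geq d-1$ to the claimed $d^*=d$. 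The handling of $k>1$ in part~(2) is also correct. No gaps.
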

                \vskip 6pt

\begin{definition}\label{d:shortencode} Let $\mathcal{C}$ be a code, and $S\subset\{1,...,n\}$. Let $\mathcal{C}(S)$ be the set of codewords of $\mathcal{C}$ which are zero in the coordinates of $S$. The shortened code, $\mathcal{C}_S$ is given by puncturing $\mathcal{C}(S)$ in the coordinates of $S$, that is $$\mathcal{C}_S = \mathcal{C}(S)^S.$$
\end{definition}
                \vskip 6pt

A shortened code in the coordinates of $S$ is the intersection of $\mathcal{C}$ with a subspace of $\F_{p^m}^n$, followed by removal of the coordinates of $S$ by puncturing in those positions. The following theorem shows the relationship between puncturing and shortening.

\begin{theorem}{\cite[theorem 1.5.7]{HuffmanPless}}\label{t:puncshortreln} Let $\mathcal{C}$ be a code over $\F_{p^m}$ of length $n$ and let $S$ be a set of $t<n$ coordinates. Denote by $\mathcal{C}^S$ the punctured code in the coordinates of $S$ and $\mathcal{C}_S$ the shortened code in the coordinates of $S$. Then $$(\mathcal{C}^\perp)_S = (\mathcal{C}^S)^\perp.$$
\end{theorem}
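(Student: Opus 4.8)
The plan is to prove the equality by showing each side is contained in the other, working directly from the definitions of puncturing (Definition~\ref{d:punccode}) and shortening (Definition~\ref{d:shortencode}). Throughout, let $t = |S|$ and abbreviate ``puncture in the coordinates of $S$'' by deleting those $t$ positions, so both sides are codes of length $n - t$. First I would fix notation: for a vector $w \in \F_{p^m}^n$ write $w^S \in \F_{p^m}^{n-t}$ for its projection onto the coordinates not in $S$, and for $u \in \F_{p^m}^{n-t}$ write $\widehat{u} \in \F_{p^m}^n$ for the ``zero-extension'' that places $0$ in every coordinate of $S$ and agrees with $u$ elsewhere. The key elementary observation is that for any $x \in \F_{p^m}^n$ whose coordinates in $S$ all vanish (i.e.\ $x = \widehat{x^S}$) and any $y \in \F_{p^m}^n$, the standard inner products satisfy $x \cdot y = x^S \cdot y^S$, since the deleted coordinates contribute nothing.

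For the inclusion $(\mathcal{C}^\perp)_S \subseteq (\mathcal{C}^S)^\perp$: take $u \in (\mathcal{C}^\perp)_S$. By Definition~\ref{d:shortencode} there is $x \in \mathcal{C}^\perp$ with $x$ zero on $S$ and $x^S = u$. I must show $u$ is orthogonal to every element of $\mathcal{C}^S$. An arbitrary element of $\mathcal{C}^S$ has the form $c^S$ for some $c \in \mathcal{C}$. Then $u \cdot c^S = x^S \cdot c^S = x \cdot c = 0$, where the middle equality uses the key observation (since $x$ vanishes on $S$) and the last holds because $x \in \mathcal{C}^\perp$. Hence $u \in (\mathcal{C}^S)^\perp$.

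For the reverse inclusion, I would first compare dimensions to make the argument clean, or alternatively argue directly; the direct argument is the more delicate one, so I will sketch it. Let $u \in (\mathcal{C}^S)^\perp$, and consider $\widehat{u} \in \F_{p^m}^n$, which is zero on $S$. I claim $\widehat{u} \in \mathcal{C}^\perp$: for any $c \in \mathcal{C}$ we have $\widehat{u} \cdot c = u \cdot c^S = 0$ since $c^S \in \mathcal{C}^S$ and $u \in (\mathcal{C}^S)^\perp$. So $\widehat{u}$ is a codeword of $\mathcal{C}^\perp$ that vanishes on the coordinates of $S$, and puncturing it in $S$ recovers $u$; by Definition~\ref{d:shortencode} this places $u \in (\mathcal{C}^\perp)_S$. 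Together with the first inclusion this gives the equality. The main obstacle — really the only subtle point — is making sure the zero-extension $\widehat{u}$ genuinely lands in $\mathcal{C}^\perp$ rather than merely in some larger space, and that the correspondence $u \leftrightarrow \widehat{u}$ between length-$(n-t)$ vectors and $S$-vanishing length-$n$ vectors is a bijection compatible with inner products; both reduce to the bookkeeping in the key observation above, so no serious calculation is needed. If one prefers to avoid the explicit two-sided argument, an alternative is to establish $(\mathcal{C}^\perp)_S \subseteq (\mathcal{C}^S)^\perp$ as above and then check $\dim (\mathcal{C}^\perp)_S = \dim (\mathcal{C}^S)^\perp$ using the standard identities $\dim \mathcal{C}^S + \dim(\mathcal{C}^\perp)_S = \dim \mathcal{C}^S + (n - t - \dim \mathcal{C}^S) = n - t$, which forces equality; I would present whichever version is shorter in context.
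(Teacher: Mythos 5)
The paper gives no proof of this theorem; it is stated as a citation to Huffman--Pless, so there is no internal argument to compare against. Your direct two-inclusion argument is correct and self-contained: the bookkeeping observation that $x \cdot y = x^S \cdot y^S$ whenever $x$ vanishes on the coordinates of $S$ drives both directions, and the zero-extension $\widehat{u}$ is exactly the right device for the reverse inclusion, since $\widehat{u}\cdot c = u\cdot c^S = 0$ for all $c\in\mathcal{C}$ shows $\widehat{u}\in\mathcal{C}^\perp(S)$, hence $u = \widehat{u}^{\,S} \in (\mathcal{C}^\perp)_S$. One caution about the ``alternative'' you sketch at the end: the identity $\dim(\mathcal{C}^\perp)_S = n - t - \dim \mathcal{C}^S$ is not an independent standard fact --- it is precisely the dimension statement of the theorem you are proving (take dimensions in $(\mathcal{C}^\perp)_S = (\mathcal{C}^S)^\perp$), so invoking it to close the gap would be circular. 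The genuinely independent dimension identities relate $\dim\mathcal{C}^S$ to $\dim\mathcal{C}$ and $\dim\mathcal{C}(S^c)$, not to $\dim(\mathcal{C}^\perp)_S$ directly. Stick with the two-inclusion argument, which needs no such shortcut and is already complete.
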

                \vskip 6pt

\section{Construction of Punctured Reed-Solomon Codes}

As in the previous section, we set $N = p^m - 1$, $R = \F_{p^m}[x]/\langle x^N-1\rangle$ and we  let $n<N$ be the length of the code. Delsarte's theorem says that to find the dimension of a $\GRS_k\langle g'\rangle\cap \F_p^n$ code, one can find the dimension of the kernel of the trace restricted to the dual code, which is a $\GRS_{n-k}\langle g\rangle$ code, for some $g\in R$. This amounts to finding the dimension of the space $\{f\in R\colon \deg f < n-k,\; \tr(\Ev_{Z_g^c, \bar{1}}(fg)) = 0\}$ over $\F_p$.

\begin{proposition}\label{p:Tmap} Let $f\in R$ and $\T\colon R\rightarrow R$ be the map given by $\T(f) = f + f^p + ... + f^{p^{m-1}}$. Then,
\begin{enumerate}
    \item For $a\in \F_p$, $\T(af) = a\T(f)$,
    \item For every $f\in R$, $\T(f)^p = \T(f^p) = \T(f)$,
    \item For every $f\in R$, $\Ev(\T(f)) = \tr(\Ev(f))$,
    \item $\Ev(\T(f)) = 0$ iff $\T(f) = 0$.
\end{enumerate}
\end{proposition}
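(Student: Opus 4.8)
The plan is to verify the four claims in order, since each is essentially a translation of a standard property of the field trace $\tr\colon\F_{p^m}\to\F_p$ into the ring $R=\F_{p^m}[x]/\langle x^N-1\rangle$ via the evaluation isomorphism of Lemma~\ref{l:iso}. The key observation underlying everything is that the Frobenius map $f\mapsto f^p$ on $R$ is an $\F_p$-algebra homomorphism that commutes with $\Ev$ in the sense that $\Ev(f^p)$ is the componentwise $p$-th power of $\Ev(f)$; this holds because $R$ has characteristic $p$ and because $(f^p)(\eta^i)=f(\eta^i)^p$. I would establish this compatibility first, as a preliminary remark, and then each part follows quickly.

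For part~(1), $\T$ is a sum of iterated Frobenius maps, each of which is additive and fixes $\F_p$ pointwise (since $a^p=a$ for $a\in\F_p$), so $\T(af)=af+(af)^p+\dots+(af)^{p^{m-1}}=a(f+f^p+\dots+f^{p^{m-1}})=a\T(f)$. For part~(2), compute $\T(f)^p=f^p+f^{p^2}+\dots+f^{p^m}$; since $f^{p^m}=f$ in $R$ — because $x^{p^m}=x\cdot x^{N}=x\cdot 1=x$ modulo $x^N-1$, so raising to the $p^m$ fixes every coordinate of $\Ev(f)$, hence fixes $f$ by the isomorphism — this sum telescopes back to $\T(f)$, and identically $\T(f^p)=f^p+\dots+f^{p^m}=\T(f)$. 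For part~(3), apply $\Ev$ and use the compatibility of $\Ev$ with Frobenius: $\Ev(\T(f))=\Ev(f)+\Ev(f)^{[p]}+\dots+\Ev(f)^{[p^{m-1}]}$, where $[p^j]$ denotes componentwise $p^j$-th power, and this is exactly $\tr$ applied componentwise to $\Ev(f)$. Part~(4) is immediate from Lemma~\ref{l:iso}: $\Ev$ is an isomorphism, hence injective, so $\Ev(\T(f))=0$ forces $\T(f)=0$, and the converse is trivial.

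I do not anticipate a genuine obstacle here; the proposition is foundational bookkeeping rather than a substantive result. The one point requiring a little care is the identity $f^{p^m}=f$ in $R$, which is needed in parts~(2) and~(3): it is true in $R=\F_{p^m}[x]/\langle x^N-1\rangle$ with $N=p^m-1$ precisely because the exponent $N$ was chosen so that $x^{p^m}\equiv x$, whereas it would fail in the polynomial ring $\F_{p^m}[x]$ itself. Equivalently, one sees it through $\Ev$: every coordinate of $\Ev(f)$ lies in $\F_{p^m}$, so is fixed by the $p^m$-power map, and then injectivity of $\Ev$ transfers this back to $f$. Making this explicit is the only place where the specific choice of modulus $x^N-1$ is used, and it is worth a sentence in the writeup.
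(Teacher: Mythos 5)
Your proof is correct and takes essentially the same route as the paper's, which simply asserts that (1)--(3) follow from the definition of $\T$, the properties of $\F_{p^m}$, and working modulo $x^N-1$, and that (4) follows from $\Ev$ being an isomorphism; you are just spelling out those observations, in particular the identity $f^{p^m}=f$ in $R$ and the compatibility of $\Ev$ with Frobenius.
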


\begin{proof} (1), (2), and (3) follow immediately from the definition of $\T$, properties of $\F_p^m$, and the fact that we are working modulo $x^N-1$. (4) follows from the fact that $\Ev$ is an isomorphism. \end{proof}
                \vskip 6pt

From Proposition \ref{p:Tmap}, we conclude that for $g\in R$, \begin{equation}\label{e:space}
\begin{split}
\{f\in R\colon \deg f < n-k,\; \tr(\Ev_{Z_g^c, \bar{1}}(fg)) = 0\} &= \{f\in R\colon \deg f < n-k,\; \Ev_{Z_g^c, \bar{1}}(\T(fg)) = 0\} \\
&= \{f\in R \colon \deg f < n-k,\; \T(fg) = 0\}
\end{split}
\end{equation}

We now restrict ourselves to a special class of functions for $g$, those which satisfy $g^p = g$.

\begin{proposition}\label{p:gprops} Let $g\in R$. Then, the following are equivalent,
\begin{enumerate}
    \item $g = \T(h)$ for some $h\in R$,
    \item $g^p = g$,
    \item $g$ evaluates to $\F_p$.
\end{enumerate}
\end{proposition}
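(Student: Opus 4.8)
The plan is to establish the cycle of implications $(1)\Rightarrow(2)\Rightarrow(3)\Rightarrow(1)$, moving back and forth between $R$ and $\F_{p^m}^N$ through the isomorphism $\Ev$ of Lemma~\ref{l:iso}. Under this isomorphism condition (3) is exactly the statement that $\Ev(g)\in\F_p^N$, that is, $g(\eta^i)\in\F_p$ for every $i$, so all three conditions become assertions one can check coordinatewise.

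The first two implications are quick. For $(1)\Rightarrow(2)$, if $g=\T(h)$ then Proposition~\ref{p:Tmap}(2) gives $g^p=\T(h)^p=\T(h)=g$. For $(2)\Rightarrow(3)$, I would use that $\Ev$ is a ring homomorphism, so $\Ev(g)^p=\Ev(g^p)=\Ev(g)$, where the $p$-th power in $\F_{p^m}^N$ is taken coordinatewise; hence each coordinate $g(\eta^i)$ is a root of $x^p-x$ and therefore lies in $\F_p$, which is condition (3).

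The remaining implication $(3)\Rightarrow(1)$ is where the one nontrivial ingredient enters, namely the surjectivity of the field trace $\tr\colon\F_{p^m}\to\F_p$. Writing $\Ev(g)=(g_1,\dots,g_N)$ with each $g_i\in\F_p$, I would choose $c_i\in\F_{p^m}$ with $\tr(c_i)=g_i$, set $h=\Ev^{-1}(c_1,\dots,c_N)\in R$, and then invoke Proposition~\ref{p:Tmap}(3): $\Ev(\T(h))=\tr(\Ev(h))=(\tr(c_1),\dots,\tr(c_N))=(g_1,\dots,g_N)=\Ev(g)$. Since $\Ev$ is injective, this yields $\T(h)=g$, as required. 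I expect this last step to be the ``hard'' part only in the sense that it is the single place an external fact is used (surjectivity of $\tr$, equivalently that $\tr$ is a nonzero $\F_p$-linear form on $\F_{p^m}$); everything else is transport of structure along $\Ev$ together with the already-proven Proposition~\ref{p:Tmap}.
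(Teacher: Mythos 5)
Your proposal is correct and follows essentially the same route as the paper: the cycle $(1)\Rightarrow(2)\Rightarrow(3)\Rightarrow(1)$, using Proposition~\ref{p:Tmap}(2) for the first step, coordinatewise evaluation and $x^p=x$ for the second, and surjectivity of $\tr$ plus interpolation (i.e.\ $\Ev^{-1}$) together with Proposition~\ref{p:Tmap}(3) and injectivity of $\Ev$ for the third. The only difference is cosmetic: you phrase the middle step via $\Ev$ being a ring homomorphism, whereas the paper evaluates directly, but the content is identical.
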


\begin{proof} Suppose that for some $h\in R$, $g = \T(h)$. Then $$g^p = \T(h)^p = T(h) = g$$ where the second equality follows from Proposition \ref{p:Tmap}. If $g^p = g$, then for any $\alpha\in\F_{p^m}$, $g(\alpha)^p = g(\alpha)$ and so $g(\alpha)\in \F_p$. Lastly, suppose that for every $\alpha_i\in \F_{p^m}^*$, $g(\alpha_i)\in \F_p$. Since $\tr$ is surjective, there exists $\beta_i\in\F_{p^m}$ such that $\tr(\beta_i) = g(\alpha_i)$. Let $h$ be the interpolation polynomial satisfying $h(\alpha_i) =  \beta_i.$ Then, $$\Ev (\T(h)) = \tr (\Ev(h)) = \Ev (g)$$ and since $\Ev$ is an isomorphism, $\T(h) = g$.
\end{proof}
                \vskip 6pt

\begin{definition}\label{d:cycdef} Let $g\in R$ be such that $g$ satisfies one of the equivalent conditions in \ref{p:gprops}. Then, we will call $g$ cyclotomic. \end{definition}
                \vskip 6pt

\begin{proposition}\label{p:glinear} Suppose that $g\in R$ is cyclotomic. Then, $\T(fg) = g\T(f)$. \end{proposition}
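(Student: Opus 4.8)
The plan is to unwind $\T(fg)$ directly from its definition and exploit the fact that the $p$-th power map is a ring endomorphism of $R$. Write
\[
\T(fg) \;=\; \sum_{i=0}^{m-1} (fg)^{p^i}.
\]
Since $R = \F_{p^m}[x]/\langle x^N - 1\rangle$ is a commutative ring of characteristic $p$, the Frobenius map $h \mapsto h^p$ is a ring homomorphism, and hence so is each of its iterates; in particular $(fg)^{p^i} = f^{p^i} g^{p^i}$ for every $i \geq 0$.

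The next step is to show that the cyclotomic hypothesis propagates to all $p$-power exponents, i.e.\ $g^{p^i} = g$ for all $i \geq 0$. This follows by a one-line induction on $i$: the case $i = 0$ is trivial, and if $g^{p^i} = g$ then $g^{p^{i+1}} = (g^{p^i})^p = g^p = g$, where the last equality is the defining property of a cyclotomic element (Proposition~\ref{p:gprops}(2)).

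Substituting both observations into the sum and using commutativity of multiplication in $R$ to pull $g$ out,
\[
\T(fg) \;=\; \sum_{i=0}^{m-1} f^{p^i} g \;=\; g \sum_{i=0}^{m-1} f^{p^i} \;=\; g\,\T(f),
\]
which is the asserted identity. There is essentially no obstacle here; the only point requiring a moment's care is that $g^p = g$ must be iterated to conclude $g^{p^i} = g$ for every $i$ in the range $0 \le i \le m-1$, and that the ambient ring is commutative so that $g$ may be factored out of the sum.
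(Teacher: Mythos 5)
Your proof is correct and follows the same route as the paper's: expand $\T(fg)$ as $\sum_{i=0}^{m-1} f^{p^i} g^{p^i}$ via Frobenius, replace each $g^{p^i}$ by $g$ using the cyclotomic hypothesis, and factor $g$ out of the sum. You merely make explicit the short induction showing $g^{p^i} = g$, which the paper leaves implicit.
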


\begin{proof} We have \begin{align*} \T(fg) &= \sum_{i=0}^{m-1}f^{p^i}g^{p^i} \\&= \sum_{i=0}^{m-1}f^{p^i}g \\ &= g\T(f). \end{align*} \end{proof}
                \vskip 6pt

If $g$ is cyclotomic, then \eqref{e:space} becomes $$\{f\in R\colon \deg f < n-k,\; \T(fg)) = 0\} = \{f\in R\colon \deg f < n-k,\; g\T(f)) = 0\}.$$

\begin{proposition}\label{p:equivcodes} Let $g_1$ and $g_2$ be cyclotomic polynomials with the same roots in $\F_{p^m}$, let $n = |Z_{g_1}^c|$ and $k\leq n$. Then, there is an $\F_p$-monomial transformation taking $\GRS_{n-k}\langle g_1\rangle$ to $\GRS_{n-k}\langle g_2\rangle$.\end{proposition}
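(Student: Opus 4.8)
The plan is to exhibit the monomial transformation explicitly on the level of the polynomial descriptions of the two codes. Since $g_1$ and $g_2$ are cyclotomic with the same zero set $Z := Z_{g_1} = Z_{g_2}$, they share the same support $Z^c = \{\alpha_1,\dots,\alpha_n\}$, and both evaluate into $\F_p$ by Proposition~\ref{p:gprops}(3); in particular $g_1(\alpha_i), g_2(\alpha_i) \in \F_p^*$ for each $i$. The natural candidate for the transformation is the diagonal map $D$ on $\F_{p^m}^n$ that multiplies the $i$th coordinate by the scalar $g_2(\alpha_i)/g_1(\alpha_i) \in \F_p^*$. Because all these scalars lie in $\F_p^*$, $D$ is an $\F_p$-monomial transformation (indeed an $\F_p$-diagonal one), so it suffices to check that $D$ carries $\GRS_{n-k}\langle g_1\rangle$ onto $\GRS_{n-k}\langle g_2\rangle$.

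For this, recall from Definition~\ref{d:newgrs} that a generic codeword of $\GRS_{n-k}\langle g_1\rangle$ has the form $\Ev_{Z^c,\bar1}(f g_1) = \bigl(f(\alpha_1)g_1(\alpha_1),\dots,f(\alpha_n)g_1(\alpha_n)\bigr)$ with $\deg f < n-k$. Applying $D$ coordinatewise multiplies the $i$th entry by $g_2(\alpha_i)/g_1(\alpha_i)$, producing $\bigl(f(\alpha_1)g_2(\alpha_1),\dots,f(\alpha_n)g_2(\alpha_n)\bigr) = \Ev_{Z^c,\bar1}(f g_2)$, which is the corresponding codeword of $\GRS_{n-k}\langle g_2\rangle$. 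As $f$ ranges over all polynomials of degree $< n-k$ this gives a bijection between the two spanning sets, so $D\bigl(\GRS_{n-k}\langle g_1\rangle\bigr) = \GRS_{n-k}\langle g_2\rangle$, as required.

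The only point that needs a little care — and the step I expect to be the main (mild) obstacle — is justifying that the per-coordinate scalars $g_2(\alpha_i)/g_1(\alpha_i)$ are well-defined elements of $\F_p^*$. This uses that $g_1,g_2$ are cyclotomic, hence $\F_p$-valued on all of $\F_{p^m}^*$ by Proposition~\ref{p:gprops}, together with the defining property from Proposition~\ref{p:newgrs} that a twist polynomial is nonzero precisely off its zero set, so $g_1(\alpha_i)\neq 0\neq g_2(\alpha_i)$ for $\alpha_i\in Z^c$. One should also remark that the argument does not actually require $\deg f < n-k$: the same diagonal map identifies $\GRS_j\langle g_1\rangle$ with $\GRS_j\langle g_2\rangle$ for every $j$, and in particular is compatible with the filtration by degree. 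No deeper input (e.g.\ the isomorphism of Lemma~\ref{l:iso} or the trace machinery) is needed for this statement; the content is simply that the twist polynomial only matters through its values on the support, and changing those values by $\F_p^*$-scalars is an $\F_p$-monomial operation.
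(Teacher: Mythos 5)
Your proof is correct and uses exactly the same construction as the paper: the diagonal map scaling the $i$th coordinate by $g_2(\alpha_i)/g_1(\alpha_i)$. If anything you are slightly more careful than the paper, which does not explicitly observe that these scalars lie in $\F_p^*$ (via Proposition~\ref{p:gprops}) --- a point that is genuinely needed for the map to be an $\F_p$-monomial transformation rather than merely an $\F_{p^m}$-monomial one.
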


\begin{proof} Let $\alpha_1,\alpha_2, \dots,\alpha_{n}$ be the elements of $Z_{g_1}^c=Z_{g_2}^c$.  
 Each codeword of $\GRS_k\langle g_1\rangle$ has the form 
\[(g_1(\alpha_1)f(\alpha_1), ... ,g_1(\alpha_{n})f(\alpha_{n}))\]
 for some $f \in R$ with $\deg(f)< n-k$.
Define the map $\varphi\colon \F^n_{p^m} \rightarrow \F^n_{p^m}$ by 
 $$(c_1, ... c_{n}) \mapsto (c_1 g_{1}^{-1}(\alpha_1) g_2(\alpha_1), ..., c_{n-1}g_{1}^{-1}(\alpha_{n})g_2(\alpha_{n})).$$
Then $\phi$ gives  an isomorphism between $  \GRS_{n-k}\langle g_1\rangle$ and $\GRS_{n-k}\langle g_2\rangle$ for any $k<n$.
\end{proof}
                \vskip 6pt

Proposition \ref{p:equivcodes} shows that when considering Generalized Reed-Solomon codes generated by cyclotomic polynomials, or indeed subfield-subcodes of these codes,  we may restrict attention to polynomials $g$ that evaluate to either 0 or 1 on elements of $\F_{p^m}^*$.  
That is, if $g$ is any cyclotomic polynomial with roots $Z\subset\F_{p^m}^*$, define the polynomial $\hat{g}$ by the interpolation $$\hat{g}(x) = \left\{\begin{array}{ll} 0 & x\in Z_g \\ 1 & x\in \F_{p^m}^*\setminus Z_g. \end{array}\right.$$
Since there is a monomial transformation between $\GRS_{n-k}\langle g\rangle \cap \F_p^n$ and $\GRS_{n-k}\langle \hat{g}\rangle \cap \F_p^n$ the parameters of the two codes are the same.

We also  observe that, for $g$ evaluating to $0$ or $1$, $\GRS_{n-k}\langle g\rangle$ is simply the  punctured Reed-Solomon code of dimension $k$ punctured at $Z_g$.
Conversely, for any punctured Reed-Solomon code there is a polynomial $g\in R$ vanishing on the punctured locations that evaluates to 1 on the remaining locations, and this polynomial is cyclotomic.

\begin{proposition}\label{punccode} There is a one-to-one correspondence between punctured Reed-Solomon codes and Generalized Reed-Solomon codes $\GRS\langle g\rangle$ with $g$ evaluating to $0$ or 1 on all $\alpha \in \F_{p^m}^*$. \end{proposition}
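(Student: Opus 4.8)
The plan is to make the correspondence completely explicit: to a subset $S \subseteq \F_{p^m}^*$ we associate the unique $g_S \in R$ that interpolates $0$ on $S$ and $1$ on $\F_{p^m}^* \setminus S$, and we then show that, for each dimension $k$, this matches the narrow-sense Reed-Solomon code $\GRS_k\langle 1\rangle$ punctured at $S$ with the code $\GRS_k\langle g_S\rangle$, and that $S \mapsto g_S$ runs bijectively over exactly the cyclotomic polynomials taking only the values $0$ and $1$.

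First I would do the forward direction. Let $g \in R$ take only the values $0$ and $1$ on $\F_{p^m}^*$; by Proposition \ref{p:gprops} it is cyclotomic, and with $n = |Z_g^c|$ every codeword of $\GRS_k\langle g\rangle$ has the form $\Ev_{Z_g^c,\bar 1}(fg)$ with $\deg f < k$. Since $g(\alpha) = 1$ for each $\alpha \in Z_g^c$, the $\alpha$-th coordinate of such a codeword equals $f(\alpha)$, so
\[
\GRS_k\langle g\rangle = \{\,(f(\alpha))_{\alpha \in Z_g^c} \colon \deg f < k\,\}.
\]
This is precisely the Reed-Solomon code $\GRS_k\langle 1\rangle = \{(f(\alpha))_{\alpha \in \F_{p^m}^*} \colon \deg f < k\}$ with the coordinates indexed by the elements of $Z_g$ removed, i.e.\ $\GRS_k\langle g\rangle$ is $\GRS_k\langle 1\rangle$ punctured at $Z_g$ in the sense of Definition \ref{d:punccode}.

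For the reverse direction, start with an arbitrary punctured Reed-Solomon code, namely $\GRS_k\langle 1\rangle$ punctured at some $S \subseteq \F_{p^m}^*$. By Lemma \ref{l:iso}, since $\Ev$ is an isomorphism, there is a unique $g_S \in R$ with $g_S(\alpha) = 0$ for $\alpha \in S$ and $g_S(\alpha) = 1$ for $\alpha \in \F_{p^m}^* \setminus S$. Because $g_S$ evaluates into $\F_p$ it is cyclotomic by Proposition \ref{p:gprops}, it takes only the values $0$ and $1$, and $Z_{g_S} = S$; hence, by the forward direction, $\GRS_k\langle g_S\rangle$ is exactly the punctured code we started from. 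Finally, the assignments $S \mapsto g_S$ and $g \mapsto Z_g$ are mutually inverse --- this is again just the statement that $\Ev$ is a bijection, so a $\{0,1\}$-valued function on $\F_{p^m}^*$ may be prescribed arbitrarily on its support and is determined by its zero set --- so $S \leftrightarrow g_S$ is a bijection between subsets of $\F_{p^m}^*$ and $\{0,1\}$-valued polynomials in $R$. Since puncturing $\GRS_k\langle 1\rangle$ at distinct sets $S$ produces distinct codes (their surviving coordinates, indexed by $\F_{p^m}^* \setminus S$, differ), combining the three points yields the asserted one-to-one correspondence for each $k$.

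I do not expect a genuine obstacle here; the only delicate point is bookkeeping, namely making precise that a punctured Reed-Solomon code is recorded by the pair consisting of the ambient dimension $k$ and the puncturing set $S$, and that this is exactly the data of the pair $(k, g_S)$. Once that is agreed, everything rests on two facts already available: $\Ev$ is an isomorphism (Lemma \ref{l:iso}), which produces and uniquely determines $g_S$, and Proposition \ref{p:gprops}, which certifies that $g_S$ is cyclotomic.
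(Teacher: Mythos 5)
Your argument is correct and is essentially the paper's own: the paper establishes this proposition via the observation immediately preceding it, namely that for $g$ valued in $\{0,1\}$ the code $\GRS_k\langle g\rangle$ is the Reed--Solomon code punctured at $Z_g$, and conversely that interpolation produces the unique cyclotomic $g$ vanishing exactly on the punctured positions. Your write-up just makes the two directions and the bijectivity of $S \leftrightarrow g_S$ explicit, using the same ingredients (Lemma \ref{l:iso} and Proposition \ref{p:gprops}).
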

                \vskip 6pt

\section{The Dimension of  Subfield-Subcodes of
   Certain Shortened GRS Codes}
   
 In this section, we derive a lower bound for the  dimension of the dual  to codes $\tr(\GRS_{n-k}\langle g\rangle)$ where  $g$ evaluates to $0$ or $1$ at each $\alpha \in \F_{p^m}$. From the previous section we know that the parameters of such codes cover all possible parameters for  $g$  cyclotomic.  Furthermore, from the duality between puncturing and shortening, these codes are $\GRS_k\langle x\rangle_{Z_g}$.    We analyze the kernel of the map $T$ introduced in the previous section in order to apply Delsarte's theorem and \eqref{formula1}. 


\begin{definition} A cyclotomic coset of $\mathbb{Z}_{N}$ is a subset $I$ such that $I = Ip = \{x\cdot p\colon x\in I\}$. If there is a $t\in \mathbb{Z}_N$ such that every element of $I$ can be written as $t\cdot p^i$ for some $i$, then we call $I$ a minimal cyclotomic coset. For a minimal cyclotomic coset $I$ with smallest element $b\in\mathbb{Z}_N$,  we let $I = I_b$ and $n_b= \lvert I_b\rvert$. The set of smallest elements of the minimal cyclotomic cosets will be denoted by $\mathcal{B}$. \end{definition}
                \vskip 6pt

For every $b\in\mathcal{B}$, $n_b$ is a divisor of $m$.  It is clear that the minimal cyclotomic cosets partition $\mathbb{Z}_N$ and every cyclotomic coset is the union of minimal cyclotomic cosets. 
Consequently, any element $f\in R$ may be decomposed as a sum of polynomials with support in the minimal cyclotomic cosets, so  $f = \sum_{b\in\mathcal{B}}f_b$ and $\textnormal{supp}(f_b)\subseteq I_b$.
Furthermore, $\textnormal{supp} (T(f_b)) \subseteq I_b$.

\begin{proposition} Let $f\in R$ and  $f = \sum_{b\in\mathcal{B}} f_b$ where $\textnormal{supp}f_b\subseteq I_b$. Then $f$ is cyclotomic if and only if for each $b\in\mathcal{B}$, $f_b$ is cyclotomic. Furthermore, for each $f_b$ there exists $\alpha\in \F_{p^m}$ such that $$f_b = \T(\alpha x^b) = \sum_{i=0}^{m-1}(\alpha x^b)^{p^i}.$$
\end{proposition}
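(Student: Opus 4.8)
The plan is to reduce the "if and only if" to a componentwise statement using the compatibility of $\T$ with the cyclotomic-coset decomposition, and then to prove the "furthermore" part by a dimension count on each block $I_b$.

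First I would establish the key observation that $\T$ respects the decomposition $f = \sum_{b\in\mathcal{B}} f_b$: since raising to the $p$-th power in $R = \F_{p^m}[x]/\langle x^N-1\rangle$ permutes exponents by multiplication by $p$ modulo $N$, and each minimal cyclotomic coset $I_b$ is stable under this permutation, we have $\operatorname{supp}(\T(f_b)) \subseteq I_b$ (this is already noted in the text just before the statement). Hence $\T(f) = \sum_{b\in\mathcal{B}} \T(f_b)$ is again a decomposition respecting the partition. By Proposition~\ref{p:gprops}, $f$ is cyclotomic iff $\T(\text{something})=f$, equivalently (using Proposition~\ref{p:Tmap}(2) applied in the form $g^p=g$) iff $f^p=f$; but $f^p = \sum_b f_b^p$ with $\operatorname{supp}(f_b^p)\subseteq I_b$, so $f^p=f$ holds iff $f_b^p = f_b$ for every $b$, i.e. iff each $f_b$ is cyclotomic. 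That gives the equivalence.

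For the "furthermore" statement, fix $b\in\mathcal{B}$ and consider the $\F_p$-subspace $V_b \subseteq R$ of cyclotomic polynomials supported on $I_b$. On one hand, every $f_b$ in this subspace is, by Proposition~\ref{p:gprops}(1), of the form $\T(h)$ for some $h\in R$, and by projecting $h$ onto the coset $I_b$ (which doesn't change $\T(h)$ on that coset, by the support observation) we may assume $\operatorname{supp}(h)\subseteq I_b$; writing $h = \sum_{i=0}^{m-1} c_i x^{b p^i}$ and noting $\T$ is $\F_p$-linear with $\T(x^{bp^i}) = \T(x^b)^{p^{?}}$-type cyclic behavior, one checks $\T(h) = \T(\alpha x^b)$ for a suitable $\alpha\in\F_{p^m}$ obtained from the $c_i$. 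On the other hand, each $\T(\alpha x^b) = \sum_{i=0}^{m-1}(\alpha x^b)^{p^i} = \sum_{i=0}^{m-1}\alpha^{p^i} x^{b p^i \bmod N}$ is cyclotomic and supported on $I_b$, so $\{\T(\alpha x^b) : \alpha\in\F_{p^m}\} \subseteq V_b$. It remains to see this map $\alpha\mapsto \T(\alpha x^b)$ is onto $V_b$, which follows from the previous sentence, and to identify its image precisely; the point is that every element of $V_b$ arises this way.

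The main obstacle I anticipate is bookkeeping with the coset $I_b$ when $n_b = |I_b| < m$, i.e. when $b$ is not in a "full-length" coset: then the exponents $bp^i \bmod N$ repeat with period $n_b$, so the sum defining $\T(\alpha x^b)$ collapses — the coefficient of $x^{bp^j}$ becomes $\sum_{i \equiv j \ (n_b)} \alpha^{p^i} = \sum_{\ell=0}^{m/n_b - 1}\alpha^{p^{j+\ell n_b}}$, which is a partial trace. One must check that this collapsing is exactly compatible with the condition $f_b^p = f_b$ forcing $f_b$'s coefficients to satisfy the corresponding Frobenius-twisted relations, so that $\alpha\mapsto\T(\alpha x^b)$ remains surjective onto $V_b$ (its kernel being the $\alpha$ with the relevant partial trace zero). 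This is a finite field linear-algebra argument — surjectivity of the trace/partial-trace maps $\F_{p^m}\to\F_{p^{n_b}}$ — and is the only place where real care is needed; the rest is formal.
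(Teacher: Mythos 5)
Your argument is correct, and for the main (``furthermore'') part it coincides with the paper's. For the equivalence you take a mildly different route: you invoke characterization (2) of Proposition~\ref{p:gprops} ($f$ cyclotomic iff $f^p=f$) and observe that Frobenius preserves each block of the support decomposition, so $f^p=f$ iff $f_b^p=f_b$ for every $b$. The paper instead uses characterization (1): it writes $f=\T(h)$, decomposes $h=\sum_b h_b$, and reads off $f_b=\T(h_b)$ from the uniqueness of the support decomposition, the converse being immediate. Both are two-line arguments; yours avoids introducing $h$ at all, while the paper's sets up exactly the notation its ``furthermore'' computation then reuses. For that second part your plan is the paper's: reduce to $f_b=\T(h_b)$ with $\textnormal{supp}(h_b)\subseteq I_b$, write $h_b=\sum_i c_i x^{bp^i}$, and fold each term into the form $\T(c_i^{p^{m-i}}x^b)$ using $\T(u)=\T(u^p)$, so that $\alpha=\sum_i c_i^{p^{m-i}}$ works. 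You leave this folding as ``one checks,'' but it is precisely the identity the paper computes; and your worry about the collapse of exponents when $n_b<m$ does not open a gap, since the proposition asserts only the existence of some $\alpha$ with $f_b=\T(\alpha x^b)$, which your construction already delivers.
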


\begin{proof} Suppose that $f$ is cyclotomic so that $f = \T(h)$ for some $h\in R$. Writing $h = \sum_{b\in\mathcal{B}} h_b$ we have $\T(h) = \sum_{b\in\mathcal{B}}\T(h_b).$ Letting $f_b = \T(h_b)$ shows that $f$ is the sum of cyclotomic polynomials. The converse is immediate. 

Now, consider a cyclotomic polynomial with support in a minimal cyclotomic coset $I_b$. Say, $f_b = \T(h)$ for some $h\in R$. Then, \begin{align*} f_b &= \T(h) \\ &= \T\left(\sum_{b'\in\mathcal{B}}h_{b'}\right) \\ &= \sum_{b'\in\mathcal{B}}\T(h_{b'}).\end{align*} Since $\T(h_{b'})$ has support in $I_{b'}$, then all terms in the sum must equal zero except for the term $\T(h_b)$. Therefore, we can write \begin{align*} f_b &= \T(h_b) \\ &= \sum_{i=0}^{n_b - 1}\T(\alpha_i x^{bp^i}) \\ &= \sum_{j=0}^{m - 1}\left(\sum_{i=0}^{n_b-1}\alpha_{i}x^{bp^i}\right)^{p^j} \\ &= \sum_{j=0}^{m - 1}x^b\left(\sum_{i=0}^{n_b-1}\alpha_{i}^{p^{m+i}}\right)^{p^{j-i}} \end{align*} where we use that $x^{bp^{n_b}} = x^{bp^m} = x^{b}.$
\end{proof}
                \vskip 6pt

If $h = \sum h_b$, then since each $h_b$ has disjoint support and $\T(h_b)$ has support in $I_b$, we have that $\T(h)=0$ if and only $\T(h_b)=0$ for each $b\in\mathcal{B}$.

\begin{proposition} Let $\eta\in\F_{p^m}$ be a primitive element, and define $g_{b,k,\ell} = \eta^kx^b - \eta^{kp^\ell}x^{bp^\ell}.$ Furthermore, let $\gamma_0, ..., \gamma_{m-n_b-1}$ be an $\F_p$ basis for the kernel of $\tr_{\F_{p^m}:\F_{p^{n_b}}}$. Then, an $\F_p$ basis for $\mathcal{F}_b = \{f\in R \colon \textnormal{supp}f\subseteq I_b, \T(f) = 0\}$ is given by $$\{g_{b,k,\ell}\colon 0\leq k < m,\; 0 < \ell \leq n_b - 1\} \cup \{\gamma_i x^b\colon 0 \leq i < m-n_b\}.$$ \end{proposition}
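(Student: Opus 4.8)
The plan is to show that the proposed set spans $\mathcal{F}_b$, lies inside $\mathcal{F}_b$, and has the right cardinality, so that it is actually a basis. First I would count: the proposed set has $m(n_b-1)$ elements of the form $g_{b,k,\ell}$ (as $k$ ranges over $0,\dots,m-1$ and $\ell$ over $1,\dots,n_b-1$) together with $m-n_b$ elements $\gamma_i x^b$, giving $mn_b - m + m - n_b = n_b(m-1)$ in total. On the other hand, the space $\{f\in R\colon \operatorname{supp} f\subseteq I_b\}$ has $\F_p$-dimension $mn_b$ (it is spanned over $\F_p$ by $\beta x^{bp^i}$ for $0\le i<n_b$ and $\beta$ running over an $\F_p$-basis of $\F_{p^m}$, and one checks no relations since distinct exponents $bp^i \bmod N$ are distinct for $i=0,\dots,n_b-1$). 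By the preceding proposition every cyclotomic polynomial supported in $I_b$ is $\T(\alpha x^b)$ for some $\alpha\in\F_{p^m}$; the map $\alpha\mapsto\T(\alpha x^b)$ is $\F_p$-linear from $\F_{p^m}$ onto the space of such cyclotomic polynomials. Its kernel over $\F_{p^m}$ is computed from the explicit formula in the previous proposition — one sees $\T(\alpha x^b) = x^b\cdot\tr_{\F_{p^m}:\F_{p^{n_b}}}(\alpha')$-type expression — so the image has $\F_p$-dimension $m - (m-n_b) = n_b$. Hence the space of cyclotomic polynomials supported in $I_b$ has dimension $n_b$, and $\mathcal{F}_b$, being the kernel of $\T$ restricted to the $mn_b$-dimensional space, has dimension $mn_b - n_b = n_b(m-1)$, matching the count.

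Second I would verify that every proposed generator lies in $\mathcal{F}_b$. For $g_{b,k,\ell} = \eta^k x^b - \eta^{kp^\ell}x^{bp^\ell}$, note that $\eta^{kp^\ell}x^{bp^\ell} = (\eta^k x^b)^{p^\ell}$ in $R$ (using $x^{bp^\ell}$ as the relevant power and the Frobenius on coefficients), so $g_{b,k,\ell} = w - w^{p^\ell}$ where $w = \eta^k x^b$. Applying $\T$ and using $\T(w^{p^\ell}) = \T(w)$ from Proposition \ref{p:Tmap}(2) — more precisely, since $\T$ telescopes over a full orbit of length $m$ and $p^\ell$ permutes the summands — gives $\T(g_{b,k,\ell}) = \T(w) - \T(w^{p^\ell}) = 0$. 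The support of $g_{b,k,\ell}$ is contained in $\{b, bp^\ell\} \subseteq I_b$ as required. For $\gamma_i x^b$ with $\gamma_i$ in the kernel of $\tr_{\F_{p^m}:\F_{p^{n_b}}}$, I would use the explicit formula: $\T(\gamma_i x^b) = \sum_{j=0}^{m-1}(\gamma_i x^b)^{p^j}$, and since $x^{bp^{n_b}} = x^b$ the sum collapses to $x^b$ times $\sum_{j=0}^{n_b-1}(\text{stuff})$ grouped into blocks, which evaluates to $x^b\cdot\tr_{\F_{p^m}:\F_{p^{n_b}}}(\gamma_i)^{(\text{some twist})} = 0$. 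So both families lie in $\mathcal{F}_b$.

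Third, I would establish linear independence of the whole set over $\F_p$, which together with the dimension count finishes the proof. Here one examines a vanishing $\F_p$-linear combination $\sum_{k,\ell} a_{k,\ell}g_{b,k,\ell} + \sum_i c_i\gamma_i x^b = 0$ in $R$. Collecting coefficients of each monomial $x^{bp^t}$ for $t=0,\dots,n_b-1$: the coefficient of $x^b$ ($t=0$) involves the $c_i\gamma_i$ terms plus all the $\eta^k$-terms from $g_{b,k,0}$... but $\ell\ge1$, so for $t=0$ only the $\gamma_i x^b$ part and the leading terms $\eta^k x^b$ of $g_{b,k,\ell}$ contribute, while for $t=\ell\ge1$ the terms $-\eta^{kp^\ell}x^{bp^\ell}$ contribute. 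This gives a triangular-type system; I expect this bookkeeping — disentangling which $g_{b,k,\ell}$ hit which monomial $x^{bp^t}$ and showing the resulting coefficient equations over $\F_{p^m}$ force all $a_{k,\ell}=c_i=0$ using that $\{\eta^k\}_k$ and $\{\gamma_i\}$ span the right spaces — to be the main obstacle, since it is a genuine computation rather than a formal manipulation. An alternative that sidesteps some of this: once I know $\dim_{\F_p}\mathcal{F}_b = n_b(m-1)$ and that the $n_b(m-1)$ listed elements all lie in $\mathcal{F}_b$, it suffices to prove they are linearly independent; and for that it is enough to show the $g_{b,k,\ell}$ together with $\gamma_i x^b$ and the $n_b$ independent cyclotomic elements $\T(\eta^k x^b)$ (a suitable subset) span the full $mn_b$-dimensional ambient space, which is a cleaner spanning argument. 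I would pick whichever of these two routes produces the shorter verification.
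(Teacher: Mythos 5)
Your proposal is correct, and it reaches the conclusion by a genuinely more local route than the paper. The paper's proof has the same skeleton (independence plus a dimension count forces spanning), but it gets the dimension count \emph{globally}: it recalls that $\dim_{\F_p}\ker\T = (m-1)N$ on all of $R$, observes that $\ker\T$ is the direct sum of the $\mathcal{F}_b$, and checks that the total number of proposed basis elements summed over all $b\in\mathcal{B}$ equals $(m-1)N$. You instead compute $\dim_{\F_p}\mathcal{F}_b = n_b(m-1)$ \emph{for each coset separately}, by rank--nullity applied to $\T$ restricted to the $mn_b$-dimensional space of polynomials supported in $I_b$, identifying the image with the cyclotomic polynomials $\T(\alpha x^b)$ and its dimension as $n_b$ via the kernel of the relative trace $\tr_{\F_{p^m}:\F_{p^{n_b}}}$. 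This buys you two things the paper's version lacks: (i) you explicitly verify that the listed elements actually lie in $\mathcal{F}_b$ (the paper never checks $\T(g_{b,k,\ell})=0$ or $\T(\gamma_i x^b)=0$, without which its counting argument is formally incomplete), and (ii) the relative-trace computation that justifies the appearance of the $\gamma_i$ in the statement is made visible rather than implicit. On the other side, the paper's global count is shorter once one accepts $\dim\ker\T=(m-1)N$. For the independence step both arguments rely on the same unstated fact: since $\eta$ is primitive, $1,\eta,\dots,\eta^{m-1}$ is an $\F_p$-basis of $\F_{p^m}$, so for each fixed $t\geq 1$ the coefficient equation $\sum_k a_{k,t}\eta^{kp^t}=0$ forces $a_{k,t}=0$; the paper's phrase ``by consideration of degrees'' elides this, and your triangular bookkeeping (eliminate the $a_{k,\ell}$ from the monomials $x^{bp^t}$, $t\geq 1$, then the $c_i$ from $x^b$) is the honest version of it and does go through.
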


\begin{proof} For each $b$, it is clear that the polynomials in the set $\{g_{b,k,\ell}\colon 0\leq k < m,\; 0 < \ell \leq n_b - 1\} \cup \{\gamma_i x^b\colon 0 \leq i < m-n_b\}$ must be linearly independent by consideration of degrees. To show that this set is spanning, we recall that the dimension of $\mathcal{F} = \{f\in R \colon T(f) = 0\}$ is $(m-1)N$. For $b\neq b'$, the basis functions for the polynomials with support in $I_b$ and $I_{b'}$ are linearly independent because their supports are disjoint. For a particular $b\in\mathcal{B}$ we can count the number of such functions to be $m (n_b - 1) + m-n_b$. Summing over all of $\mathcal{B}$, we can verify $$\sum_{b\in \mathcal{B}} m(n_b-1)+m-n_b = n_b (m-1) = (m-1)\sum_{b\in\mathcal{B}}n_b = (m-1)N.$$ Since $\mathcal{F}$ is the disjoint union of the $\mathcal{F}_b$, then polynomials above must be a spanning set.
\end{proof}

To obtain a lower bound for $\dim_{\F_p}\left(\GRS_{n-k}\langle x\rangle_{Z_g}\cap\F_p^n\right)$, we can count the polynomials of degree less than $k$ in our basis. This yields the following formula.

\begin{theorem}\label{Mainbound}  Let $g \in R$ be a cyclotomic
  polynomial. Then, $$\dim_{\F_p}\left(\GRS_k\langle
    x\rangle_{Z_g}\cap\F_p^n\right) \geq n-m(n-k)+\sum_{b\in B\cap
    A}\left(m(|I_b\cap A|-1) + m-n_b\right)$$ where $A = \{0,1,...,
  n-k-1\}$.  \end{theorem}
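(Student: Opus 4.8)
The plan is to combine Delsarte's theorem in the form of equation~\eqref{formula1} with the explicit basis for $\mathcal{F}_b$ obtained in the previous proposition, and simply count how many basis elements correspond to polynomials of degree less than $k$. Recall that by the discussion opening this section, the code $\tr(\GRS_{n-k}\langle g\rangle)$ is the dual of $\GRS_k\langle x\rangle_{Z_g}\cap\F_p^n$, and that $\GRS_{n-k}\langle g\rangle$ (with $g$ evaluating to $0$ or $1$) is the Reed--Solomon code of dimension $n-k$ punctured at $Z_g$. Applying \eqref{formula1} with $\mathcal{C}=\GRS_k\langle x\rangle_{Z_g}$, we get $\dim_{\F_p}(\mathcal{C}\cap\F_p^n) = n - m(n-k) + \dim_{\F_p}(\ker\tr|_{\mathcal{C}^\perp})$, and by Propositions~\ref{p:Tmap} and~\ref{p:glinear} this kernel dimension equals the dimension of $\{f\in R \colon \deg f < n-k,\; \T(fg)=0\}$, which since $g$ is cyclotomic is $\{f\in R\colon \deg f< n-k,\; g\,\T(f)=0\}$.

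First I would reduce to counting inside the spaces $\mathcal{F}_b$. Decompose $f=\sum_{b\in\mathcal{B}}f_b$ with $\operatorname{supp}(f_b)\subseteq I_b$. The condition $\deg f < n-k$ means $\operatorname{supp}(f)\subseteq A := \{0,1,\dots,n-k-1\}$, hence $f_b$ can only be nonzero for those $b$ with $I_b\cap A\neq\emptyset$, and moreover $\operatorname{supp}(f_b)\subseteq I_b\cap A$. Since the twist polynomial in use is $g=\hat g$ evaluating to $0$ or $1$, and (crucially) $\GRS_k\langle x\rangle_{Z_g}$ is a \emph{shortened} RS code, the relevant annihilation condition $g\,\T(f)=0$ should reduce to $\T(f)=0$ on the nonpunctured coordinates; I would argue that a lower bound is obtained by keeping only those $f$ with $\T(f)=0$ outright, i.e. $f\in\mathcal{F}=\bigoplus_b\mathcal{F}_b$, and then further restricting to $\operatorname{supp}(f)\subseteq A$. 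So the quantity to bound below is $\dim_{\F_p}\{f\in\mathcal{F}\colon \operatorname{supp}(f)\subseteq A\} = \sum_{b\in\mathcal{B}}\dim_{\F_p}\{f\in\mathcal{F}_b\colon \operatorname{supp}(f_b)\subseteq I_b\cap A\}$, the sum being direct because of disjoint supports.

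Next I would count, for each fixed $b$ with $b\in A$ (equivalently $b\in\mathcal{B}\cap A$, since $b$ is the smallest element of $I_b$ and $A$ is an initial segment), how many of the basis vectors from the previous proposition have support contained in $I_b\cap A$. The basis of $\mathcal{F}_b$ is $\{g_{b,k,\ell}\colon 0\le k<m,\ 0<\ell\le n_b-1\}\cup\{\gamma_ix^b\colon 0\le i< m-n_b\}$; the $\gamma_i x^b$ all have support $\{b\}\subseteq I_b\cap A$, contributing $m-n_b$. For $g_{b,k,\ell}=\eta^kx^b-\eta^{kp^\ell}x^{bp^\ell}$ one must have both $b$ and $bp^\ell$ (reduced mod $N$) lying in $A$; here I would need the observation that the exponents $bp^\ell$, $\ell=0,\dots,n_b-1$, run exactly over the elements of $I_b$, so requiring the support of $g_{b,k,\ell}$ in $I_b\cap A$ amounts to $bp^\ell\in A$, giving $|I_b\cap A|-1$ admissible values of $\ell$ (we already know $b=bp^0\in A$), each with $m$ choices of $k$, for a total of $m(|I_b\cap A|-1)$. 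Summing the two contributions over $b\in\mathcal{B}\cap A$ yields exactly $\sum_{b\in\mathcal{B}\cap A}\bigl(m(|I_b\cap A|-1)+m-n_b\bigr)$, and adding the constant $n-m(n-k)$ gives the claimed bound.

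The main obstacle I anticipate is the first reduction step: justifying rigorously that passing from the condition $g\,\T(f)=0$ on the shortened/punctured code to the stronger condition $\T(f)=0$ only loses dimension (so that we genuinely get a lower bound), and confirming that the chosen subset of basis elements is $\F_p$-linearly independent as a subset of the ambient kernel space when intersected with the degree constraint. The counting itself is routine once one notes that $\{bp^\ell \bmod N : 0\le \ell < n_b\}$ enumerates $I_b$ without repetition, so the two potential subtleties are (i) ensuring $g_{b,k,\ell}$ with $\ell \ne 0$ really does have its nonzero exponent at position $bp^\ell$ even after reduction mod $N$, which is exactly the content of $I_b$ being a cyclotomic coset, and (ii) handling the edge case $n_b \mid m$ small (e.g. $n_b=1$, where there are no $g_{b,k,\ell}$ and the coset is a fixed point of multiplication by $p$). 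Neither looks serious, so the theorem follows by assembling these counts.
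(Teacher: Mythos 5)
Your proof is correct and follows essentially the same approach the paper sketches: apply Delsarte's theorem via equation~(\ref{formula1}), note that $\T(f)=0$ implies $g\T(f)=0$ so restricting to $\mathcal{F}=\ker\T$ can only lose dimension and hence yields a lower bound, and then count the basis elements of $\bigoplus_{b}\mathcal{F}_b$ whose support lies in $A$. The points you flag as potential gaps --- that passing to the stronger condition $\T(f)=0$ shrinks the space, that $\{bp^\ell\bmod N : 0\le\ell<n_b\}$ enumerates $I_b$ without repetition, and that $b\in A$ iff $I_b\cap A\neq\emptyset$ because $b=\min I_b$ and $A$ is an initial segment --- are all sound and are exactly what the paper's brief proof sketch relies on implicitly.
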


The inequality of the theorem is strict  when there is some $f\in R$ of
degree less than $n-k$ such that $gT(f)= 0$ but $T(f)\ne 0$.

\section{Computation}

We present two algorithms that offer  different search patterns through the same set of codes. The code was
 written in Magma \cite{ma} and compares the results of the
 constructed code with the table of best known codes, given by
 \cite{cota}.   

Algorithm 1 uses cyclotomic polynomials with coefficients in $\F_p$ for $g$ and Algorithm 2 uses
an evaluating set of points $S=\mathbb{F}_{p^m}^{\ast}\setminus
\{\alpha^i: i\in I\}$ where $I$ is a cyclotomic coset. 
It is impractical to test all cases so for  Algorithm~1  we  only use polynomials $g$ whose
coefficients are~1 on a union of  either one, two or three minimal
cyclotomic cosets, and~0 elsewhere.   For Algorithm~2, we used
cyclotomic polynomials vanishing on a union of one, two or three
minimal cyclotomic cosets.
 It turns out that Algorithm 1 produced many more new codes. 

For the codes tested, the lower bound for the dimension in  Theorem
\ref{Mainbound} was  never larger than, although it was often equal
to,  the dimension of codes in the tables of best known codes.
In order to beat the best known codes we used Magma to compute the
true dimension, which can exceed the bound.  Thus it would be
interesting to improve our lower bound by investigating those  $f\in R$
such that $T(fg)=0$ although $T(f)\ne 0$.


\begin{algorithm}[htb]
  \caption{Search for new codes with cyclotomic polynomials}\label{alg1}
 \pagebreak
  \begin{algorithmic}[1]
    \REQUIRE A finite field $\mathbb{F}_{p^m}$ with primitive root $\alpha$. \\
    \ENSURE List L with new codes over $\mathbb{F}_p$.\\
    \medskip
   BEGIN
   \STATE Find a set of cyclotomic cosets, $I = \{I_{b_i} : I_{b_i} \subset \{1, ..., p^r-1\}\}$
   \STATE For each $I_b\in I$ consider $g=\sum_{i\in I_b} x^i$.
   \medskip
   \STATE Consider $S:=[\alpha^i: g(\alpha^i)\neq 0]$.
   \STATE Consider $v:=[Evaluate(g,\alpha^i): \alpha^i \in S]$.
    \STATE FOR $k=1,\ldots,\# S$ DO
 \STATE $C=\GRS_k(S,v)$; $D=TraceCode(C)$; $E=Dual(D)$.
    \STATE IF  $E$  is a new code with respect to \cite{cota} include it in L ENDIF.
\STATE ENDFOR
\STATE Return(L).

 END\\
 \end{algorithmic}
\end{algorithm}

\begin{algorithm}[htb]
    \caption{Search for new codes with cyclotomic roots}\label{alg2}
    \pagebreak
    \begin{algorithmic}
        \REQUIRE A finite field $\mathbb{F}_{p^m}$ \\
        \ENSURE List L with new codes over $\F_p$ \\
        \medskip
        BEGIN
         \STATE Find a set of cyclotomic cosets, $I = \{I_{b_i} : I_{b_i} \subset \{1, ..., p^r-1\}\}$
   \STATE For each $I_b\in I$ consider  $S=\mathbb{F}_{p^m}^{\ast}\setminus \{\alpha^i: i\in I_b\}$        
        \STATE Set $v = (1,..., 1)$
        \STATE FOR $k = 1$ to $n$ DO
        \STATE $C = \GRS_k(S, v)$; $D = TraceCode(C)$; $E=Dual(D)$.
        \STATE If $E$ is a new code with respect to \cite{cota} include it in L ENDIF
        \STATE ENDFOR
        \STATE Return(L).
        
    END\\
    \end{algorithmic}
\end{algorithm}

In the following subsections we display in different tables the new codes we have obtained over $\mathbb{F}_2$, $\mathbb{F}_3$ and $\mathbb{F}_5$.
The first column corresponds with the parameters of the best known linear codes \cite{cota}.
In the second column we write the parameters of the new codes. The minimum distance of the new codes is not directly computed, rather we use the lower bound given by the minimum distance of the original code.  The dimension is computed by constructing the dual code---applying the function $T$ to a generating set for the parent code over $\F_{p^m}$---and then using linear algebra to compute the dimension of the dual code.  We only list codes that are either better than the previously best-known codes, or  equal to best-known codes for which there was only an existence proof and no known construction.

The last column gives either the polynomial $g$ we  used in our algorithm or the operation used to get
new codes from other codes, i.e., either ShortenCode or PunctureCode.

The latter operations are used to obtain new codes from existing ones.  If $C$ is a linear code with parameters $[n,k,d]$
then $ShortenCode(C)$ is a linear code with parameters $[n-1,\geq k-1,d]$ and $PunctureCode(C)$ is a linear code with parameters
$[n-1,k,\geq d-1]$.

\subsection{New codes over $\mathbb{F}_2$}

We have obtained two new codes from Algorithm \ref{alg1} in Fig \ref{fig:1}, and we can get more new codes from $C_1$ and $C_2$ see Fig \ref{fig:2}.

\begin{figure*}[t]\caption{New Codes over $\mathbb{F}_2$ obtained with Algorithm \ref{alg1}}
\begin{center}
\begin{tabular}{|c|c|c|}\hline
Best Known  & New codes &  $g$   \\ \hline

$[192,66,39]$ & $C_1=[192,66,40] $ &   $x^{192} + x^{144} + x^{132} + x^{129} + x^{96} + x^{72} + x^{66} + x^{48} + x^{36}$ \\
&& $ + x^{33} + x^{24} + x^{18} + x^{12} + x^9 + x^6 + x^3$ \\ \hline\hline
$[240,76,51]$ & $C_2=[240,76,52]$ & $x^{240} + x^{225} + x^{210} + x^{195} + x^{180} + x^{165} + x^{150}$\\
&&$ + x^{135} + x^{120} + x^{105} + x^{90} + x^{75} + x^{60} + x^{45} + x^{30} + x^{15}$ \\ \hline
\end{tabular}
\end {center}
\label{fig:1}
\end{figure*}

\begin{figure*}[t]\caption{New Codes over $\mathbb{F}_2$ obtained from $C_1$ and $C_2$}
\begin{center}
\begin{tabular}{|c|c|c|}\hline
Best Known  & New codes &  Operation   \\ \hline
$[191,66,38]$ & $C_3=[191,66,39]$ &  PunctureCode($C_1,192$)\\ \hline
$[191,65,39]$ & $C_4=[191,65,40] $ &  ShortenCode($C_1,192$)\\ \hline
$[190,65,38]$ & $C_5=[190,65,39] $ &  ShortenCode($C_3,191$)\\ \hline \hline

$[239,76,50]$ & $C_6=[239,76,51] $ &   PunctureCode($C_2,240$)\\ \hline
$[238,76,49]$ & $C_7=[238,76,50] $ &   PunctureCode($C_2,\{240,239\}$)\\ \hline
$[237,76,48]$ & $C_8=[237,76,49] $ &   PunctureCode($C_2,\{240,239,238\}$)\\ \hline

$[239,75,51]$ & $C_9=[239,75,52] $ &   ShortenCode($C_2,240$)\\ \hline
$[238,74,51]$ & $C_{10}=[238,74,52] $ &   ShortenCode($C_2\{240,239\}$)\\ \hline
$[237,73,51]$ & $C_{11}=[237,73,52] $ &   ShortenCode($C_2,\{240,239,238\}$)\\ \hline

$[238,75,50]$ & $C_{12}=[238,75,51] $ &   ShortenCode($C_6,239$)\\ \hline
$[237,74,50]$ & $C_{13}=[237,74,51] $ &   ShortenCode($C_6\{239,238\}$)\\ \hline
$[236,73,50]$ & $C_{14}=[236,73,51] $ &   ShortenCode($C_6,\{239,238,237\}$)\\ \hline

$[237,75,49]$ & $C_{14}=[237,75,50] $ &   ShortenCode($C_7,238$)\\ \hline
$[236,74,49]$ & $C_{15}=[236,74,50] $ &   ShortenCode($C_7\{238,237\}$)\\ \hline
$[235,73,49]$ & $C_{16}=[235,73,50] $ &   ShortenCode($C_7,\{238,237,236\}$)\\ \hline

$[236,75,48]$ & $C_{17}=[236,75,49] $ &   ShortenCode($C_8,237$)\\ \hline
$[235,74,48]$ & $C_{18}=[235,74,49] $ &   ShortenCode($C_8\{237,236\}$)\\ \hline
$[234,73,48]$ & $C_{19}=[234,73,49] $ &   ShortenCode($C_8,\{237,236,235\}$)\\ \hline
\end{tabular}
\end {center}
\label{fig:2}
\end{figure*}

\subsection{New codes over $\mathbb{F}_3$}
We have obtained $14$ new codes from Algorithm \ref{alg1} in Fig \ref{fig:3}, and we can get more new codes from them see Fig \ref{fig:4}.

\begin{figure*}[t]\caption{New Codes over $\mathbb{F}_3$ obtained with Algorithm \ref{alg1}}
\begin{center}
\begin{tabular}{|c|c|c|}\hline
Best Known  & New codes &  $g$   \\ \hline

$[162,92,23]$ & $C_{20}=[162,92,23] $ &   $x^{81} + x^{27} + x^9 + x^3 + x$ \\ \hline
$[162,97,21]$ & $C_{21}=[162,97,21] $ &   $x^{81} + x^{27} + x^9 + x^3 + x$ \\ \hline
$[162,102,19]$ & $C_{22}=[162,102,20] $ &   $x^{81} + x^{27} + x^9 + x^3 + x$ \\ \hline
$[162,107,17]$ & $C_{23}=[162,107,18] $ &   $x^{81} + x^{27} + x^9 + x^3 + x$ \\ \hline
$[162,117,14]$ & $C_{24}=[162,117,15] $ &   $x^{81} + x^{27} + x^9 + x^3 + x$ \\ \hline\hline

$[161,91,23]$ & $C_{25}=[161,91,23] $ &   $x^{202} + x^{148} + x^{130} + x^{124} + x^{122} + x^{121}$ \\ \hline
$[161,96,21]$ & $C_{26}=[161,96,21] $ &   $x^{202} + x^{148} + x^{130} + x^{124} + x^{122} + x^{121}$ \\ \hline
$[161,101,19]$ & $C_{27}=[161,101,20] $ &   $x^{202} + x^{148} + x^{130} + x^{124} + x^{122} + x^{121}$ \\ \hline
$[161,106,17]$ & $C_{28}=[161,106,18] $ &   $x^{202} + x^{148} + x^{130} + x^{124} + x^{122} + x^{121}$ \\ \hline
$[161,116,14]$ & $C_{29}=[161,116,15] $ &   $x^{202} + x^{148} + x^{130} + x^{124} + x^{122} + x^{121}$ \\ \hline\hline

$[171,101,23]$ & $C_{30}=[171,101,23] $ &   $x^{162} + x^{81} + x^{54} + x^{27} + x^{18} + x^9 + x^6 + x^3 + x^2 + x$ \\ \hline
$[171,81,32]$ & $C_{31}=[171,81,32] $ &   $x^{162} + x^{81} + x^{54} + x^{27} + x^{18} + x^9 + x^6 + x^3 + x^2 + x$ \\ \hline\hline

$[170,100,23]$ & $C_{32}=[170,100,23] $ &   $x^{175} + x^{139} + x^{127} + x^{123} + x^{121} + x^{41}$ \\ \hline
$[170,80,32]$ & $C_{33}=[170,80,32] $ &   $x^{175} + x^{139} + x^{127} + x^{123} + x^{121} + x^{41}$ \\ \hline
\end{tabular}
\end {center}
\label{fig:3}
\end{figure*}

\begin{figure*}[t]\caption{New Codes over $\mathbb{F}_3$ obtained $C_{20},\ldots,C_{33}$}
\begin{center}
\begin{tabular}{|c|c|c|}\hline
Best Known  & New codes &  operation   \\ \hline
$[160,95,21]$ & $C_{34}=[160,95,21] $ & ShortenCode($C_{26},161$)   \\ \hline
$[159,94,21]$ & $C_{35}=[159,94,21] $ & ShortenCode($C_{26},\{161,160\}$)   \\ \hline
$[158,93,21]$ & $C_{36}=[158,93,21] $ & ShortenCode($C_{26},\{161,160,159\}$)   \\ \hline\hline

$[160,100,19]$ & $C_{37}=[160,100,20] $ &   ShortenCode($C_{27},161$)  \\ \hline
$[159,99,19]$ & $C_{38}=[159,99,20] $ &  ShortenCode($C_{27},\{161,160\}$)\\ \hline

$[161,102,19]$ & $C_{39}=[161,102,19] $ &   PunctureCode($C_{27},162$)  \\ \hline
$[160,101,19]$ & $C_{40}=[160,101,19] $ &   PunctureCode($C_{18},\{161\}$)\\ \hline
$[159,100,19]$ & $C_{41}=[159,100,19] $ &   ShortenCode($C_{40},160$)  \\ \hline
$[158,99,19]$ & $C_{42}=[158,99,19] $ &  ShortenCode($C_{40},\{159\}$)\\ \hline\hline

$[160,105,17]$ & $C_{43}=[160,105,18] $ &  ShortenCode($C_{28},161$)\\ \hline
$[159,104,17]$ & $C_{44}=[159,104,18] $ &  ShortenCode($C_{28},\{161,160\}$)\\ \hline
$\vdots$ & $\vdots$ &  $\vdots$\\ \hline
$[134,180,18]$ & $C_{68}=[135,180,18] $ &  ShortenCode($C_{28},\{161,\ldots,136\}$)\\ \hline\hline

$[160,115,14]$ & $C_{69}=[160,115,15] $ &  ShortenCode($C_{29},161$)\\ \hline
$\vdots$ & $\vdots$ &  $\vdots$\\ \hline
$[148,103,14]$ & $C_{80}=[148,103,15] $ &  ShortenCode($C_{29},\{161,\ldots,149\}$)\\ \hline\hline

$[169,99,23]$ & $C_{81}=[169,99,23] $ &   ShortenCode($C_{32},\{170\}$)\\ \hline
$\vdots$ & $\vdots$ &  $\vdots$\\ \hline
$[161,91,23]$ & $C_{89}=[161,91,23] $ &   ShortenCode($C_{32},\{170,\ldots,162\}$)\\ \hline\hline

$[169,79,32]$ & $C_{90}=[169,79,32] $ &   ShortenCode($C_{33},\{170\}$)\\ \hline
$[168,78,32]$ & $C_{91}=[168,78,32] $ &   ShortenCode($C_{33},\{170,169\}$)\\ \hline
$[167,77,32]$ & $C_{92}=[167,77,32] $ &   ShortenCode($C_{33},\{170,169,168\}$)\\ \hline

\end{tabular}
\end {center}
\label{fig:4}
\end{figure*}

\subsection{New codes over $\mathbb{F}_5$}
We have obtained $3$ new codes from Algorithm \ref{alg1} in Fig \ref{fig:5}, and we can get more new codes from them see Fig \ref{fig:6}.

\begin{figure*}[t]\caption{New Codes over $\mathbb{F}_5$ obtained with Algorithm \ref{alg1}}
\begin{center}
\begin{tabular}{|c|c|c|}\hline
Best Known  & New codes &  $g$   \\ \hline

$[100,33,35]$ & $C_{93}=[100,33,35] $ &   $x^{25} + x^5 + x$ \\ \hline
$[100,36,33]$ & $C_{94}=[100,36,34] $ &   $x^{25} + x^5 + x$ \\ \hline\hline
$[99,35,33]$ & $C_{95}=[99,35,34] $ &   $x^{56} + x^{36} + x^{32} + x^{31}$ \\ \hline
\end{tabular}
\end{center}
\label{fig:5}
\end{figure*}

\begin{figure*}[t]\caption{New Codes over $\mathbb{F}_5$ obtained $C_{95}$}
\begin{center}
\begin{tabular}{|c|c|c|}\hline
Best Known  & New codes &  Operation   \\ \hline

$[98,34,33]$ & $C_{96}=[98,34,34] $ &   ShortenCode($C_{95},\{99\}$)\\ \hline
$[97,33,33]$ & $C_{97}=[97,33,34] $ &   ShortenCode($C_{95},\{99,98\}$)\\ \hline
$[96,32,34]$ & $C_{98}=[96,32,34] $ &   ShortenCode($C_{95},\{99,98,97\}$)\\ \hline
\end{tabular}
\end{center}
\label{fig:6}
\end{figure*}

\end{document}